\documentclass[journal]{IEEEtran}

\usepackage{amsmath,amssymb,amsfonts}
\usepackage{mathtools}
\usepackage{bm}
\usepackage{algorithm}
\usepackage{algorithmic}
\usepackage{booktabs}
\usepackage{graphicx}
\usepackage{url}
\usepackage{cite}
\usepackage{xcolor}
\newcommand{\rev}[1]{\textcolor{black}{#1}}

\usepackage{balance}

\newtheorem{proposition}{Proposition}
\newtheorem{theorem}{Theorem}
\newtheorem{remark}{Remark}

\newcommand{\safeincludegraphics}[2][]{%
  \IfFileExists{#2}{\includegraphics[#1]{#2}}{%
    \fbox{\parbox[c][1.55in][c]{0.94\linewidth}{\centering
    Missing figure file:\\[2pt]\texttt{\detokenize{#2}}}}}}

\begin{document}

\title{NOMA-Assisted Multi-User Hybrid Wireless-Fed Pinching-Antenna Systems}
\author{Hui Yang, Peng Zhu, Ming Zeng, \textit{Senior Member, IEEE}, Ebrahim Bedeer, Saeid Pakravan and Yulei Wang
\thanks{The work of M. Zeng was supported in part by NSERC under Grants RGPIN-2026- 05442 and CRC-2022-00115, and in part by FRQNT under Grant 341270. (Corresponding author: Ming Zeng.)} 

    \thanks{H. Yang is with Hunan Institute of Science and Technology, Yueyang, China (email: achelyal@163.com).}
    
    \thanks{P. Zhu is with Hunan Institute of Science and Technology, Yueyang, China (email: zhupeng@hnist.edu.cn).}
    
    \thanks{M. Zeng is with Laval University, Quebec City, Canada (email: ming.zeng@gel.ulaval.ca).}

    \thanks{E. Bedeer is with University of Saskatchewan, Saskatoon, SK, Canada (email: e.bedeer@usask.ca).}

    \thanks{S. Pakravan is with University of Quebec at Montreal, Montreal, QC, Canada (email: pakravan.saeid@uqam.ca).}

    \thanks{Y. Wang is with South-Central Minzu University, Wuhan, China (email: ylwang@mail.scuec.edu.cn).}

    %\thanks{N. Xia is with Nanjing Normal University, Nanjing, China (e-mail: nian.xia@nnu.edu.cn).} 
    }
\maketitle

\begin{abstract}
This paper investigates a non-orthogonal multiple-access (NOMA)-assisted multi-user wireless-fed pinching-antenna system (Wi-PASS). A multi-antenna base station (BS) simultaneously serves one direct user and wirelessly feeds a full-duplex amplify-and-forward relay equipped with a directional horn receiver. The relay injects the NOMA waveform into a dielectric waveguide, and one position-adjustable pinching antenna serves two additional users. Under maximum-gain zero-forcing transmission, ideal successive interference cancellation, and an additive residual self-interference model, we minimize the total consumed power by jointly optimizing the BS powers, relay amplification factor, NOMA power coefficients, decoding order, and pinching-antenna position. For a fixed position and decoding order, a variable transformation reduces the resource-allocation problem to a strictly convex scalar problem and yields a closed-form global solution. The position-dependent decoding order partitions the waveguide into finitely many intervals, and the derivative of the optimized power is governed by a quadratic polynomial on each interval. Hence, the globally optimal position is found by evaluating a small finite candidate set. Simulations at 28~GHz over 1000 random user topologies show that the proposed architecture consistently requires the lowest consumed power among direct-transmission, array-fed, no-PASS, and equal-time orthogonal multiple-access (OMA) benchmarks. At target signal-to-interference-plus-noise ratios of 20, 25, and 30~dB, it reduces the average power by 19.2\%, 21.1\%, and 21.7\%, respectively, relative to equal-time OMA, while preserving its advantage as the BS--relay distance and residual SI increase.
\end{abstract}

\begin{IEEEkeywords}
Amplify-and-forward relay, full-duplex relay, non-orthogonal multiple access, pinching antenna, power minimization, waveguide-fed antenna.
\end{IEEEkeywords}

\section{Introduction}
\label{sec:introduction}

Millimeter-wave and sub-terahertz communication systems provide abundant spectrum resources and enable high-data-rate transmission. 
However, the severe propagation loss and high susceptibility to blockage at these frequencies substantially limit reliable coverage, particularly when the transmitter and intended users are separated by long or obstructed links \cite{Sun_TVT18, Hao_Network22}.
%Their coverage, however, is severely constrained by free-space attenuation and blockage \cite{Sun_TVT18, Hao_Network22}. 
Pinching-antenna systems (PASS) have recently emerged as a flexible waveguide-based architecture for mitigating these propagation limitations \cite{ding2024flexible}. In PASS, the signal is conveyed along a dielectric waveguide and radiated through one or more pinching antennas \rev{(PAs)} placed at favorable locations. By moving the radiation points closer to the intended users, PASS can shorten the final free-space propagation distance and improve link quality \cite{fukuda2022pinching, yang2025principles, Zeng_WCM26}. Unlike conventional fixed antenna arrays, PASS provides an additional degree of spatial reconfigurability by adjusting the physical locations or activation states of the PAs along the waveguide. Motivated by this spatial flexibility, existing studies have developed physical and signal models for PASS, characterized its performance, and investigated joint optimization of transmission resources and PA locations\cite{liu2025architecture,tyrovolas2026performance,wang2025modeling, Zhao_TCOM25, Zhao_pass1}. 
%PASS can reconfigure the wireless channel by adjusting the positions or activation states of the pinching antennas. Motivated by these advantages, recent studies have investigated PASS architectures, physical and signal models, performance characterization, and joint transmit-resource and antenna-position optimization \cite{liu2025architecture,tyrovolas2026performance,wang2025modeling, Zhao_TCOM25, Zhao_pass1}.

Beyond coverage enhancement, the shared-feed structure of PASS provides a natural platform for multi-user transmission. Since multiple signals can be conveyed through the same waveguide and radiated from shared PAs, power-domain non-orthogonal multiple access (NOMA) can be employed to multiplex multiple users over the same time-frequency resource. The users' signals are then separated through successive interference cancellation (SIC) \cite{ding2024flexible, wang2024, Fu_GC25}. Accordingly, both NOMA- and orthogonal multiple-access (OMA)-based PASS designs have been studied to improve spectral efficiency, energy efficiency, and user connectivity \cite{cheng2025omanoma, Zeng_COMML25, zeng2025EE, KE2026103249}. However, these studies generally assume that the dielectric waveguide is physically connected to the base station (BS).

Although wired feeding can provide a reliable signal source for PASS, it may become costly or impractical when the waveguide must be deployed far from the BS, around obstacles, or in distributed coverage regions. Wireless-fed PASS (Wi-PASS) addresses this deployment limitation by delivering the signal to a remote waveguide through a wireless first hop \cite{wijewardhana2025wipass}. Wi-PASS therefore combines the deployment flexibility of wireless relaying with the favorable final-hop propagation offered by movable pinching antennas. A horn-assisted full-duplex amplify-and-forward Wi-PASS architecture was subsequently investigated for single-user transmission, demonstrating that directional horn gain and passive isolation can improve the wireless feeding link \cite{feng2026horn}. Full-duplex operation, nevertheless, introduces residual self-interference (SI), while signal amplification also increases thermal noise and relay power consumption \cite{sabharwal2014full}.

\rev{Extending Wi-PASS from single-user transmission to NOMA-based multi-user service introduces several design challenges. The power available to the PASS-side users depend jointly on the wireless feeding link, the relay amplification factor, and the subsequent waveguide-assisted transmission. Introducing NOMA further couples the users' power allocation, SIC decoding order, and PA position. In addition, the relay amplification factor scales both the desired first-hop signal and the relay thermal-noise and residual-SI components, thereby coupling the BS relay-feeding power with the PASS-side signal-to-interference-plus-noise ratios (SINRs). Consequently, the BS relay-feeding power, relay amplification factor, NOMA power coefficients, SIC decoding order, and PA position should be jointly optimized.}

\rev{The original Wi-PASS framework in \cite{wijewardhana2025wipass} introduced wireless feeding of a remote PASS, while the horn-assisted full-duplex design in \cite{feng2026horn} considered residual-SI-aware optimization for single-user transmission. These studies do not address the hybrid multi-user setting considered here, where a zero-forcing (ZF)-enabled BS simultaneously serves one user directly and feeds a full-duplex relay, and two additional users share a common position-adjustable PA through power-domain NOMA. The resulting optimization problem has coupled decision variables comprising the BS transmit powers, relay amplification factor, NOMA power allocation, SIC decoding order, and PA position. To distinguish the sources of power reduction, the numerical evaluation separately compares the effects of the horn-assisted first hop, PASS-enabled second hop, and NOMA transmission.}

The main contributions of this work are summarized as follows:
\begin{itemize}
\item \rev{\textbf{Hybrid Wi-PASS architecture and power model:} We formulate a hybrid multi-user Wi-PASS architecture in which a ZF-enabled BS simultaneously serves one direct user and wirelessly feeds a full-duplex relay, while two PASS-side users share one position-adjustable PA through power-domain NOMA. The total consumed-power model incorporates the BS and relay amplifier efficiencies, circuit power, amplified thermal noise, and additive residual SI.}

\item \textbf{Closed-form resource allocation:} \rev{For a fixed PA position and SIC decoding order, we introduce amplification-weighted power variables that transform the quality-of-service (QoS) constraints into affine inequalities. The resulting resource-allocation problem reduces to a strictly convex scalar problem whose unique optimum is available in closed form, yielding the optimal NOMA powers and relay amplification factor.}

\item \textbf{Finite-candidate global PA positioning:}  \rev{We characterize the dependence of the optimal SIC decoding order on the PA position and show that the decoding order can change at most at two position-switching points. Within each interval with a fixed decoding order, every stationary PA position is a root of a quadratic polynomial. The globally optimal position can therefore be found by evaluating a finite candidate set without requiring dense sampling over the feasible PA-position range.}

\item \textbf{Benchmark-driven numerical insights:} \rev{We develop benchmark schemes comprising direct transmission, array-fed Wi-PASS, horn-assisted relay without PASS, array-assisted relay without PASS, and equal-time OMA-based Wi-PASS. In the OMA benchmark, the two PASS-side users are assigned equal time fractions and are served through a jointly optimized common PA position.}
Monte Carlo results quantify the distinct benefits of the proposed architecture. In the SINR sweep, the proposed NOMA design reduces the total consumed power by up to 21.7\% relative to equal-time OMA and by approximately 40\% relative to array-fed Wi-PASS. The distance and residual-SI sweeps further demonstrate the benefits provided by the directional horn-assisted wireless first hop and the PASS-enabled second hop under different operating conditions.

\end{itemize}

\rev{\textit{Notation:} Bold lowercase and uppercase symbols denote vectors and matrices, respectively, while plain symbols denote scalars. The superscript $(\cdot)^H$ denotes the Hermitian transpose, $\|\cdot\|$ is the Euclidean norm, $\mathbb{E}[\cdot]$ is expectation, $\mathcal{CN}(\mu,\sigma^2)$ denotes a circularly symmetric complex Gaussian distribution with mean $\mu$ and variance $\sigma^2$, and $\bm{I}$ is an identity matrix of appropriate dimension. The operator $(\cdot)^\dagger$ denotes the Moore--Penrose pseudoinverse, and $\bm{\Pi}^{\perp}$ denotes an orthogonal projection matrix onto the orthogonal complement of the subspace specified by its subscript.}

The remainder of this paper is organized as follows. Section~\ref{sec:system_model} presents the system model and problem formulation. Section~\ref{sec:solution} derives the closed-form resource-allocation solution and the finite-candidate method for globally optimizing the PA position. Section~\ref{sec:benchmarks} introduces the benchmark schemes. Section~\ref{sec:simulation} presents the numerical results, and Section~\ref{sec:conclusion} concludes the paper.

\section{System Model and Problem Formulation}
\label{sec:system_model}

\begin{figure}[t]
\centering
\safeincludegraphics[width=0.5\textwidth]{Images/system.png}
\caption{System model of the proposed NOMA-assisted hybrid Wi-PASS, where the BS serves a direct user \(U_d\) and wirelessly feeds a full-duplex AF relay. }
\label{fig:system model}
\end{figure}

\subsection{Network Architecture and BS Transmission}

As shown in Fig. \ref{fig:system model}, we consider a multi-user hybrid Wi-PASS system, in which an $N_{\mathrm{t}}$-antenna BS simultaneously serves one direct user, denoted by $U_{\mathrm{d}}$, and wirelessly feeds a full-duplex amplify-and-forward relay. The relay employs a directional horn receiving antenna and injects the amplified signal into a dielectric waveguide. \rev{A position-adjustable PA radiates the guided signal toward two PASS-side users, denoted by $U_1$ and $U_2$, which are multiplexed through downlink NOMA.}

The PA is located at
\begin{equation}
    \bm{\Phi}_{\mathrm{p}}=(x,0,d), \qquad 0\leq x\leq L,
    \label{eq:pa_location}
\end{equation}
where $x$ is the PA position, $L$ is the waveguide length, and $d$ is the waveguide height. The position of PASS-side user $U_k$, $k\in\{1,2\}$, is
\begin{equation}
    \bm{\Phi}_k=(x_k,y_k,0).
    \label{eq:user_location}
\end{equation}

The BS simultaneously transmits a relay-feeding stream and a direct-user stream as
\begin{equation}
    \bm{x}_{\mathrm{B}}
    =\sqrt{P_{\mathrm{r}}}\bm{v}_{\mathrm{r}}s_{\mathrm{r}}
    +\sqrt{P_{\mathrm{d}}}\bm{v}_{\mathrm{d}}s_{\mathrm{d}},
    \label{eq:bs_signal}
\end{equation}
where $P_{\mathrm{r}}$ and $P_{\mathrm{d}}$ are the BS radiated powers allocated to relay feeding and direct transmission, respectively. The beamformers $\bm{v}_{\mathrm{r}}$ and $ \bm{v}_{\mathrm{d}}$ satisfy
\begin{equation}
    \|\bm{v}_{\mathrm{r}}\|^2=\|\bm{v}_{\mathrm{d}}\|^2=1,
    \label{eq:beamformer_norms}
\end{equation}
and $\mathbb{E}[|s_{\mathrm{d}}|^2]=1$.

For a given PA position, let $w\in\{1,2\}$ and $s\in\{1,2\}\setminus\{w\}$ denote the weak and strong NOMA users, respectively. The relay-feeding symbol is
\begin{equation}
    s_{\mathrm{r}}=\sqrt{a_w}s_w+\sqrt{a_s}s_s,
    \label{eq:noma_symbol}
\end{equation}
where $\mathbb{E}[|s_k|^2]=1$ and
\begin{equation}
    a_w+a_s=1, \qquad a_w\geq 0,\quad a_s\geq 0.
    \label{eq:noma_coefficients}
\end{equation}
The position-dependent decoding order is specified in Section~\ref{subsec:noma_order}.

{\color{black}
Let $\bm{h}_{\mathrm{BR}}\in\mathbb{C}^{N_{\mathrm{t}}\times 1}$ and $\bm{h}_{\mathrm{d}}\in\mathbb{C}^{N_{\mathrm{t}}\times 1}$ denote the BS--relay and BS--direct-user channels, respectively. We adopt ZF transmission such that
\begin{equation}
    \bm{h}_{\mathrm{BR}}^H\bm{v}_{\mathrm{d}}=0,
    \qquad
    \bm{h}_{\mathrm{d}}^H\bm{v}_{\mathrm{r}}=0.
    \label{eq:zf_conditions}
\end{equation}
For $N_{\mathrm{t}}\geq 2$ and linearly independent $\bm{h}_{\mathrm{BR}}$ and $\bm{h}_{\mathrm{d}}$, the maximum-gain ZF relay-feeding beamformer is
\begin{align}
    \bm{v}_{\mathrm{r}}
    &=\frac{\bm{\Pi}_{\mathrm{d}}^{\perp}\bm{h}_{\mathrm{BR}}}
    {\|\bm{\Pi}_{\mathrm{d}}^{\perp}\bm{h}_{\mathrm{BR}}\|},
    &
    \bm{\Pi}_{\mathrm{d}}^{\perp}
    &=\bm{I}-\frac{\bm{h}_{\mathrm{d}}\bm{h}_{\mathrm{d}}^H}
    {\|\bm{h}_{\mathrm{d}}\|^2},
    \label{eq:vr_zf}
\end{align}
with effective BS--relay gain
\begin{equation}
    G\triangleq |\bm{h}_{\mathrm{BR}}^H\bm{v}_{\mathrm{r}}|^2.
    \label{eq:G_definition}
\end{equation}
Similarly, the maximum-gain ZF direct-user beamformer is
\begin{align}
    \bm{v}_{\mathrm{d}}
    &=\frac{\bm{\Pi}_{\mathrm{BR}}^{\perp}\bm{h}_{\mathrm{d}}}
    {\|\bm{\Pi}_{\mathrm{BR}}^{\perp}\bm{h}_{\mathrm{d}}\|},
    &
    \bm{\Pi}_{\mathrm{BR}}^{\perp}
    &=\bm{I}-\frac{\bm{h}_{\mathrm{BR}}\bm{h}_{\mathrm{BR}}^H}
    {\|\bm{h}_{\mathrm{BR}}\|^2},
    \label{eq:vd_zf}
\end{align}
with effective BS--direct-user gain
\begin{equation}
    D\triangleq |\bm{h}_{\mathrm{d}}^H\bm{v}_{\mathrm{d}}|^2.
    \label{eq:D_definition}
\end{equation}
}

\subsection{Full-duplex Relay, Power Consumption, and PASS-Side Channels}

\rev{Residual SI after practical passive and active cancellation
arises from several hardware and channel impairments \cite{sabharwal2014full}. For analytical tractability, we model the aggregate residual SI as an additive zero-mean circularly symmetric complex Gaussian term, $z_{\mathrm{SI}}\sim\mathcal{CN}(0,\sigma_{\mathrm{SI}}^2)$, which is assumed to be independant of the relay transmit power.} The relay receives
\begin{equation}
    y_{\mathrm{R}}
    =\sqrt{P_{\mathrm{r}}}\bm{h}_{\mathrm{BR}}^H\bm{v}_{\mathrm{r}}s_{\mathrm{r}}
    +n_{\mathrm{R}}+z_{\mathrm{SI}},
    \label{eq:relay_received}
\end{equation}
where $n_{\mathrm{R}}\sim\mathcal{CN}(0,\sigma_{\mathrm{R}}^2)$ denotes the additive zero-mean circularly symmetric complex Gaussian noise at the relay. Define
\begin{equation}
    \widetilde{\sigma}_{\mathrm{R}}^2
    \triangleq \sigma_{\mathrm{R}}^2+\sigma_{\mathrm{SI}}^2.
    \label{eq:effective_relay_noise}
\end{equation}
The relay applies an amplification factor $\beta\geq 0$ and feeds
\begin{equation}
    t_{\mathrm{R}}=\beta y_{\mathrm{R}}
    \label{eq:relay_output_signal}
\end{equation}
into the dielectric waveguide, where $t_{\mathrm R}$ denotes the relay output waveform. Because both the relay thermal noise and the equivalent residual-SI component in \eqref{eq:relay_received} are amplified, the relay output power is
\begin{equation}
    P_{\mathrm{R}}^{\mathrm{out}}
    =\beta^2\left(P_{\mathrm{r}}G+\widetilde{\sigma}_{\mathrm{R}}^2\right).
    \label{eq:relay_output_power}
\end{equation}

Let $\eta_{\mathrm{B}}\in(0,1]$ and $\eta_{\mathrm{R}}\in(0,1]$ denote the BS and relay power-amplifier efficiencies, respectively. Let $P_{\mathrm{c,B}}$ and $P_{\mathrm{c,R}}$ denote their fixed circuit powers. The total consumed power is modeled as
\begin{equation}
    P_{\mathrm{tot}}
    =\frac{P_{\mathrm{r}}+P_{\mathrm{d}}}{\eta_{\mathrm{B}}}
    +\frac{P_{\mathrm{R}}^{\mathrm{out}}}{\eta_{\mathrm{R}}}
    +P_{\mathrm{c,B}}+P_{\mathrm{c,R}}.
    \label{eq:total_consumed_power}
\end{equation}
{\color{black}Accordingly, $P_{\mathrm{tot}}$ represents the power drawn by the BS and relay power amplifiers, together with their fixed circuit power consumption. Specifically, $P_r+P_d$ and $P_{\mathrm{R}}^{\mathrm{out}}$ denote the corresponding RF output powers, whereas division by $\eta_{\mathrm{B}}$ and $\eta_{\mathrm{R}}$ accounts for the power-amplifier inefficiencies. If amplifier inefficiencies and circuit consumption are neglected, i.e., $\eta_{\mathrm{B}}=\eta_{\mathrm{R}}=1$ and $P_{\mathrm{c,B}}=P_{\mathrm{c,R}}=0$, \eqref{eq:total_consumed_power} reduces to the corresponding transmit/output-power objective.}

Let $H_k(x)$ denote the power gain from the PA to PASS-side user $U_k$, given by
\begin{equation}
    H_k(x)
    =\frac{C_{\mathrm{P}}e^{-\alpha_{\mathrm{D}}x}}{\ell_k(x)},
    \qquad
    C_{\mathrm{P}}\triangleq
    \xi_{\mathrm{P}}\frac{c^2}{16\pi^2f^2},
    \label{eq:pass_channel_gain}
\end{equation}
where $\xi_{\mathrm{P}}\in(0,1]$ collects the waveguide-injection, pinching, and radiation efficiencies, and
\begin{equation}
    \ell_k(x)\triangleq (x-x_k)^2+y_k^2+d^2
    \label{eq:ellk_definition}
\end{equation}
is the squared PA--user distance. Here, $f$ is the carrier frequency, $c$ is the speed of light, and $\alpha_{\mathrm{D}}$ is the power attenuation coefficient. An attenuation value $\alpha_{\mathrm{dB}}$ specified in dB/m is converted according to $\alpha_{\mathrm{D}}=(\ln 10/10)\alpha_{\mathrm{dB}}$.

The direct user receives
\begin{equation}
    y_{\mathrm{d}}
    =\sqrt{P_{\mathrm{d}}}\bm{h}_{\mathrm{d}}^H\bm{v}_{\mathrm{d}}s_{\mathrm{d}}
    +n_{\mathrm{d}},
    \label{eq:direct_received}
\end{equation}
where $n_{\mathrm{d}}\sim\mathcal{CN}(0,\sigma_{\mathrm{d}}^2)$. Its signal-to-noise ratio is
\begin{equation}
    \gamma_{\mathrm{d}}=\frac{P_{\mathrm{d}}D}{\sigma_{\mathrm{d}}^2}.
    \label{eq:direct_sinr}
\end{equation}

\subsection{NOMA Decoding Order and Achievable SINRs}
\label{subsec:noma_order}

\rev{For a given $x$, define the PASS-side channel-gain-to-noise-power ratio}
\begin{equation}
    \zeta_k(x)\triangleq\frac{H_k(x)}{\sigma_k^2}.
    \label{eq:normalized_channel_metric}
\end{equation}
\rev{The user with the larger $\zeta_k(x)$ is designated as the strong user and performs SIC, while the other user is designated as the weak user. Thus, over an interval with a fixed decoding order,}
\begin{equation}
    \zeta_s(x)\geq \zeta_w(x),
    \label{eq:noma_order_metric}
\end{equation}
which is equivalent to
\begin{equation}
    \frac{\sigma_s^2}{H_s(x)}
    \leq
    \frac{\sigma_w^2}{H_w(x)}.
    \label{eq:noma_order_equivalent}
\end{equation}

The weak user decodes $s_w$ while treating $s_s$ as interference. Its SINR is
\begin{equation}
    \gamma_w
    =\frac{P_{\mathrm{r}}\beta^2Ga_wH_w(x)}
    {P_{\mathrm{r}}\beta^2Ga_sH_w(x)
    +\beta^2H_w(x)\widetilde{\sigma}_{\mathrm{R}}^2
    +\sigma_w^2}.
    \label{eq:weak_user_sinr}
\end{equation}
The SINR at the strong user for decoding the weak user's signal is
\begin{equation}
    \gamma_{s\rightarrow w}
    =\frac{P_{\mathrm{r}}\beta^2Ga_wH_s(x)}
    {P_{\mathrm{r}}\beta^2Ga_sH_s(x)
    +\beta^2H_s(x)\widetilde{\sigma}_{\mathrm{R}}^2
    +\sigma_s^2}.
    \label{eq:sic_sinr}
\end{equation}
After ideal SIC, the strong user decodes its own signal with SINR
\begin{equation}
    \gamma_s
    =\frac{P_{\mathrm{r}}\beta^2Ga_sH_s(x)}
    {\beta^2H_s(x)\widetilde{\sigma}_{\mathrm{R}}^2+\sigma_s^2}.
    \label{eq:strong_user_sinr}
\end{equation}

Let $\Gamma_k=2^{R_k^{\min}}-1$, $k\in\{1,2,\mathrm{d}\}$, denote the SINR targets associated with the minimum spectral efficiencies. Under the adopted maximum-gain ZF beamformers, additive residual-SI model, and ideal SIC, the total consumed-power minimization problem is
\begin{subequations}\label{prob:original}
\begin{align}
    \min_{\substack{P_{\mathrm{r}},P_{\mathrm{d}},\beta,\\a_w,a_s,x}}
    \quad &
    \frac{P_{\mathrm{r}}+P_{\mathrm{d}}}{\eta_{\mathrm{B}}}
    +\frac{\beta^2(P_{\mathrm{r}}G+\widetilde{\sigma}_{\mathrm{R}}^2)}
    {\eta_{\mathrm{R}}}
    +P_{\mathrm{c,B}}+P_{\mathrm{c,R}}
    \label{prob:original_obj}
    \\
    \mathrm{s.t.}\quad
    &\gamma_w\geq \Gamma_w,
    \label{prob:weak_qos}
    \\
    &\gamma_{s\rightarrow w}\geq \Gamma_w,
    \label{prob:sic_qos}
    \\
    &\gamma_s\geq \Gamma_s,
    \label{prob:strong_qos}
    \\
    &\gamma_{\mathrm{d}}\geq \Gamma_{\mathrm{d}},
    \label{prob:direct_qos}
    \\
    &a_w+a_s=1,\quad a_w,a_s\geq 0,
    \label{prob:noma_constraints}
    \\
    &P_{\mathrm{r}},P_{\mathrm{d}}\geq 0,\quad \beta\geq 0,
    \label{prob:power_nonnegative}
    \\
    &0\leq x\leq L,
    \label{prob:position_constraint}
\end{align}
\end{subequations}
where \rev{constraints \eqref{prob:weak_qos} and \eqref{prob:sic_qos} require the weak-user stream to be decodable at both $U_w$ and $U_s$, \eqref{prob:strong_qos} protects the strong user's own stream after SIC, and \eqref{prob:direct_qos} enforces the direct-user QoS. Problem~\eqref{prob:original} is nonconvex because these QoS constraints contain products of decision variables, such as $P_{\mathrm{r}}\beta^2a_k$, and also depend nonlinearly on the PA position through $H_k(x)$. We next remove the multiplicative coupling for a fixed position/order and then optimize the PA position globally.}
%\rev{Problem~\eqref{prob:original} is nonconvex because the QoS constraints contain products of decision variables, such as $P_{\mathrm{r}}\beta^2a_k$, and also depend nonlinearly on the PA position through $H_k(x)$. We next derive a closed-form inner solution and a finite-candidate global solution for the PA position.}

\section{Proposed Solution}
\label{sec:solution}

\subsection{ZF Beamforming and Direct-User Power}

Under \eqref{eq:zf_conditions}, the direct-user constraint is decoupled from the relay-feeding transmission. Since the objective is increasing in $P_{\mathrm{d}}$, its QoS constraint is active at optimum, yielding
\begin{equation}
    P_{\mathrm{d}}^{\star}
    =\frac{\Gamma_{\mathrm{d}}\sigma_{\mathrm{d}}^2}{D}.
    \label{eq:optimal_direct_power}
\end{equation}
\rev{The remaining decision variables are the relay-feeding power, relay amplification factor, NOMA coefficients, decoding order, and PA position.}

\subsection{Closed-Form Resource Allocation for Fixed Position and \rev{Decoding Order}}

\rev{For fixed $x$ and a fixed NOMA decoding order $(w,s)$, define}
\begin{equation}
    u\triangleq \beta^2,
    \qquad
    q_w\triangleq P_{\mathrm{r}}ua_w,
    \qquad
    q_s\triangleq P_{\mathrm{r}}ua_s.
    \label{eq:variable_transformation}
\end{equation}
Since $a_w+a_s=1$,
\begin{equation}
    q_w+q_s=P_{\mathrm{r}}u,
    \qquad
    P_{\mathrm{r}}=\frac{q_w+q_s}{u}.
    \label{eq:Pr_recovery}
\end{equation}
Define
\begin{equation}
    A\triangleq\frac{\widetilde{\sigma}_{\mathrm{R}}^2}{G},
    \qquad
    B_k(x)\triangleq\frac{\sigma_k^2}{GH_k(x)}.
    \label{eq:A_B_definitions}
\end{equation}
The three NOMA QoS constraints become
\begin{align}
    q_w&\geq \Gamma_w\big(q_s+Au+B_w(x)\big),
    \label{eq:weak_linear_constraint}
    \\
    q_w&\geq \Gamma_w\big(q_s+Au+B_s(x)\big),
    \label{eq:sic_linear_constraint}
    \\
    q_s&\geq \Gamma_s\big(Au+B_s(x)\big).
    \label{eq:strong_linear_constraint}
\end{align}
By \eqref{eq:noma_order_equivalent}, $B_w(x)\geq B_s(x)$; hence, \eqref{eq:weak_linear_constraint} implies \eqref{eq:sic_linear_constraint}.

\begin{proposition}\label{prop:fixed_x_solution}
Assume $G>0$, $D>0$, $H_k(x)>0$, and at least one PASS-side SINR target is positive. For fixed $x$ and a fixed NOMA \rev{decoding order} satisfying $B_w(x)\geq B_s(x)$, define
\begin{align}
    \Theta&\triangleq \Gamma_w+(1+\Gamma_w)\Gamma_s,
    \label{eq:Theta_definition}
    \\
    A_{\mathrm{t}}&\triangleq A\Theta,
    \label{eq:At_definition}
    \\
    B_{\mathrm{t}}(x)
    &\triangleq
    \Gamma_wB_w(x)
    +(1+\Gamma_w)\Gamma_sB_s(x).
    \label{eq:Bt_definition}
\end{align}
Then the globally optimal resource allocation for the fixed $(x,w,s)$ is
\begin{equation}
    u^{\star}(x)
    =\sqrt{\frac{\eta_{\mathrm{R}}B_{\mathrm{t}}(x)}
    {\eta_{\mathrm{B}}\left(GA_{\mathrm{t}}+\widetilde{\sigma}_{\mathrm{R}}^2\right)}},
    \label{eq:optimal_u}
\end{equation}
\begin{equation}
    q_s^{\star}(x)
    =\Gamma_s\big(Au^{\star}(x)+B_s(x)\big),
    \label{eq:optimal_qs}
\end{equation}
\begin{equation}
    q_w^{\star}(x)
    =\Gamma_w\big(q_s^{\star}(x)+Au^{\star}(x)+B_w(x)\big),
    \label{eq:optimal_qw}
\end{equation}
\begin{equation}
    P_{\mathrm{r}}^{\star}(x)
    =\frac{q_w^{\star}(x)+q_s^{\star}(x)}{u^{\star}(x)},
    \label{eq:optimal_Pr}
\end{equation}
\begin{equation}
    \beta^{\star}(x)=\sqrt{u^{\star}(x)},
    \label{eq:optimal_beta}
\end{equation}
and
\begin{equation}
    a_k^{\star}(x)
    =\frac{q_k^{\star}(x)}{q_w^{\star}(x)+q_s^{\star}(x)},
    \qquad k\in\{w,s\}.
    \label{eq:optimal_noma_coefficients}
\end{equation}
The corresponding minimum total consumed power is
\begin{align}
    P_{\mathrm{tot}}^{\star}(x;w,s)
    =&\;\frac{P_{\mathrm{d}}^{\star}}{\eta_{\mathrm{B}}}
    +P_{\mathrm{c,B}}+P_{\mathrm{c,R}}
    +\frac{A_{\mathrm{t}}}{\eta_{\mathrm{B}}}
    +\frac{GB_{\mathrm{t}}(x)}{\eta_{\mathrm{R}}}
    \nonumber\\
    &+2\sqrt{\frac{B_{\mathrm{t}}(x)
    \left(GA_{\mathrm{t}}+\widetilde{\sigma}_{\mathrm{R}}^2\right)}
    {\eta_{\mathrm{B}}\eta_{\mathrm{R}}}}.
    \label{eq:optimized_total_power}
\end{align}
\end{proposition}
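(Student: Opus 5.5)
The plan is to rewrite the fixed-$(x,w,s)$ subproblem in the variables $(u,q_w,q_s)$ of \eqref{eq:variable_transformation} and reduce it to a one-dimensional problem in $u$.

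\textbf{Objective in the new variables.} By \eqref{eq:Pr_recovery}, $P_{\mathrm r}=(q_w+q_s)/u$ and $P_{\mathrm r}u=q_w+q_s$. The variable part of the objective \eqref{prob:original_obj} therefore becomes
\begin{equation*}
F(u,q_w,q_s)=\frac{q_w+q_s}{\eta_{\mathrm B}u}+\frac{G(q_w+q_s)+\widetilde{\sigma}_{\mathrm R}^2u}{\eta_{\mathrm R}}.
\end{equation*}
The term $P_{\mathrm d}^\star$ from \eqref{eq:optimal_direct_power} and the circuit powers are additive constants. The map $(P_{\mathrm r},\beta,a_w,a_s)\mapsto(u,q_w,q_s)$ with $u>0$ is a bijection onto $\{u>0,\ q_w,q_s\ge 0\}$, with inverse \eqref{eq:optimal_Pr}--\eqref{eq:optimal_noma_coefficients}. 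The degenerate case $u=0$ is infeasible: some $\Gamma_k>0$, $B_k>0$, and $q_k=0$ when $u=0$, so the corresponding constraint fails.

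\textbf{Reduction to $u$.} Since $B_w\ge B_s$, the constraint \eqref{eq:sic_linear_constraint} is implied by \eqref{eq:weak_linear_constraint}. The feasible set is then given by the two affine constraints \eqref{eq:weak_linear_constraint} and \eqref{eq:strong_linear_constraint}. For fixed $u>0$, $F$ is strictly increasing in the sum $q_w+q_s$, and the constraints are monotone. Hence the optimum has \eqref{eq:strong_linear_constraint} active, and then \eqref{eq:weak_linear_constraint} active given $q_s$.

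To justify this, first minimize $q_s$, which relaxes the $q_w$ constraint, and then minimize $q_w$. Any feasible point with slack can be improved by lowering $q_s$ and then $q_w$ accordingly. This yields \eqref{eq:optimal_qs} and \eqref{eq:optimal_qw}. Summing gives
\begin{equation*}
q_w+q_s=\Gamma_w B_w+(1+\Gamma_w)\Gamma_s(Au+B_s)+\Gamma_w Au=A_{\mathrm t}u+B_{\mathrm t}(x),
\end{equation*}
using \eqref{eq:Theta_definition}--\eqref{eq:Bt_definition}.

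\textbf{Scalar problem.} Substituting this sum gives
\begin{equation*}
f(u)=\frac{A_{\mathrm t}}{\eta_{\mathrm B}}+\frac{GB_{\mathrm t}}{\eta_{\mathrm R}}+\frac{B_{\mathrm t}}{\eta_{\mathrm B}u}+\frac{(GA_{\mathrm t}+\widetilde{\sigma}_{\mathrm R}^2)u}{\eta_{\mathrm R}},\qquad u>0.
\end{equation*}
Here $B_{\mathrm t}>0$, because some $\Gamma_k>0$ and $B_k>0$ (the case $\Gamma_w=0$, $\Gamma_s>0$ is fine since then $B_{\mathrm t}=\Gamma_sB_s$). Also $GA_{\mathrm t}+\widetilde{\sigma}_{\mathrm R}^2>0$. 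So $f$ has the form $c_0+c_1/u+c_2u$ with $c_1,c_2>0$, which is strictly convex on $(0,\infty)$ because $f''=2c_1/u^3>0$. Setting $f'(u)=0$ gives the unique minimizer $u^\star=\sqrt{c_1/c_2}$, which is \eqref{eq:optimal_u}. The minimum value is $c_0+2\sqrt{c_1c_2}$; adding the constants gives \eqref{eq:optimized_total_power}.

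\textbf{Recovery and main obstacle.} The original variables follow from the inverse map, giving \eqref{eq:optimal_Pr}--\eqref{eq:optimal_noma_coefficients}. The resulting $a_k^\star$ are nonnegative and sum to one. The main point needing care is the activeness argument: showing that partially minimizing over $(q_w,q_s)$ for fixed $u$ loses no global optimality. This requires that \eqref{eq:sic_linear_constraint} is truly redundant, which holds exactly because of the decoding-order assumption $B_w\ge B_s$. Global optimality then follows from the nested minimization $\min_u\min_{q}F$ together with the bijectivity of the change of variables.
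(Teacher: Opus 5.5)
Your proposal is correct and follows essentially the same route as the paper: fix $u>0$, make the strong and weak constraints active (the SIC constraint being redundant because $B_w\ge B_s$), obtain $q_w+q_s=A_{\mathrm t}u+B_{\mathrm t}(x)$, and minimize the strictly convex scalar function $c_0+c_1/u+c_2u$. Your extra checks (infeasibility of $u=0$, $B_{\mathrm t}>0$, and the change-of-variables correspondence) fill in details the paper leaves implicit; the only nit is that the map is not injective at $P_{\mathrm r}=0$, which is harmless because every feasible point has $q_w+q_s>0$.
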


\begin{IEEEproof}
For any fixed $u>0$, the objective is increasing in $q_w$ and $q_s$. Consequently, the minimum feasible values satisfy \eqref{eq:strong_linear_constraint} and \eqref{eq:weak_linear_constraint} with equality, which gives
\begin{equation}
    q_s^{\star}(u)=\Gamma_s(Au+B_s),
    \label{eq:qs_of_u}
\end{equation}
and
\begin{equation}
    q_w^{\star}(u)=\Gamma_w(q_s^{\star}(u)+Au+B_w).
    \label{eq:qw_of_u}
\end{equation}
Therefore,
\begin{equation}
    q_w^{\star}(u)+q_s^{\star}(u)
    =A_{\mathrm{t}}u+B_{\mathrm{t}}.
    \label{eq:qsum_affine}
\end{equation}
Using \eqref{eq:Pr_recovery} and \eqref{eq:relay_output_power}, the variable part of the total consumed power becomes
\begin{align}
    \phi(u)
    =&\frac{A_{\mathrm{t}}u+B_{\mathrm{t}}}{\eta_{\mathrm{B}}u}
    +\frac{G(A_{\mathrm{t}}u+B_{\mathrm{t}})
    +u\widetilde{\sigma}_{\mathrm{R}}^2}{\eta_{\mathrm{R}}}
    \nonumber\\
    =&\;\frac{A_{\mathrm{t}}}{\eta_{\mathrm{B}}}
    +\frac{GB_{\mathrm{t}}}{\eta_{\mathrm{R}}}
    +\frac{B_{\mathrm{t}}}{\eta_{\mathrm{B}}u}
    +\frac{GA_{\mathrm{t}}+\widetilde{\sigma}_{\mathrm{R}}^2}
    {\eta_{\mathrm{R}}}u.
    \label{eq:phi_u}
\end{align}
Since
\begin{equation}
    \phi''(u)=\frac{2B_{\mathrm{t}}}{\eta_{\mathrm{B}}u^3}>0,
    \label{eq:phi_second_derivative}
\end{equation}
$\phi(u)$ is strictly convex over $u>0$. Setting $\phi'(u)=0$ yields \eqref{eq:optimal_u}. Substitution into \eqref{eq:qs_of_u}--\eqref{eq:qw_of_u} gives \eqref{eq:optimal_qs}--\eqref{eq:optimal_noma_coefficients}, while substituting $u^{\star}$ into \eqref{eq:phi_u} yields \eqref{eq:optimized_total_power}.
\end{IEEEproof}

\subsection{Piecewise Closed-Form PA-Position Optimization}

\rev{The NOMA decoding order changes only when} the two effective metrics in \eqref{eq:normalized_channel_metric} are equal. From \eqref{eq:pass_channel_gain}, the common factors $C_{\mathrm{P}}e^{-\alpha_{\mathrm{D}}x}$ cancel, and the \rev{decoding-order switching equation} is
\begin{equation}
    \frac{H_1(x)}{\sigma_1^2}
    =\frac{H_2(x)}{\sigma_2^2}
    \quad\Longleftrightarrow\quad
    \sigma_1^2\ell_1(x)=\sigma_2^2\ell_2(x).
    \label{eq:order_switch_equation}
\end{equation}
Expanding \eqref{eq:order_switch_equation} gives
\begin{equation}
    a_{\mathrm{o}}x^2+b_{\mathrm{o}}x+c_{\mathrm{o}}=0,
    \label{eq:order_switch_quadratic}
\end{equation}
where
\begin{align}
    a_{\mathrm{o}}&=\sigma_1^2-\sigma_2^2,
    \label{eq:ao}
    \\
    b_{\mathrm{o}}&=-2(\sigma_1^2x_1-\sigma_2^2x_2),
    \label{eq:bo}
    \\
    c_{\mathrm{o}}&=\sigma_1^2(x_1^2+y_1^2+d^2)
    -\sigma_2^2(x_2^2+y_2^2+d^2).
    \label{eq:co}
\end{align}
Only real roots in $(0,L)$ are retained. When $\sigma_1^2=\sigma_2^2$ and $x_1\neq x_2$, \eqref{eq:order_switch_quadratic} reduces to the single switching point
\begin{equation}
    x_{\mathrm{sw}}
    =\frac{x_2^2+y_2^2-x_1^2-y_1^2}{2(x_2-x_1)}.
    \label{eq:equal_noise_switching_point}
\end{equation}
When $\sigma_1^2=\sigma_2^2$ and $x_1=x_2$, the \rev{decoding order never changes} if $y_1^2\neq y_2^2$, whereas \rev{both decoding orders are equivalent} for every $x$ if $y_1^2=y_2^2$.
\rev{Let $\mathcal{X}_{\mathrm{o}}$ denote the set of valid decoding-order switching points.} Sorting
\begin{equation}
    \{0,L\}\cup\mathcal{X}_{\mathrm{o}}
    =\{\tau_0,\tau_1,\ldots,\tau_M\},
    \qquad \tau_0<\tau_1<\cdots<\tau_M,
    \label{eq:interval_boundaries}
\end{equation}
partitions $[0,L]$ into intervals over which the \rev{decoding order $(w,s)$ is fixed}.
\rev{For a given fixed-decoding-order interval, define}
\begin{equation}
    \kappa_k\triangleq\frac{\sigma_k^2}{GC_{\mathrm{P}}}.
    \label{eq:kappa_definition}
\end{equation}
Then
\begin{equation}
    B_k(x)=\kappa_ke^{\alpha_{\mathrm{D}}x}\ell_k(x).
    \label{eq:Bk_explicit}
\end{equation}
Using \eqref{eq:Bt_definition}, define the positive weights
\begin{equation}
    \omega_w\triangleq\Gamma_w\kappa_w,
    \qquad
    \omega_s\triangleq(1+\Gamma_w)\Gamma_s\kappa_s.
    \label{eq:omega_definition}
\end{equation}
Thus,
\begin{equation}
    B_{\mathrm{t}}(x)
    =e^{\alpha_{\mathrm{D}}x}
    \left[\omega_w\ell_w(x)+\omega_s\ell_s(x)\right].
    \label{eq:Bt_explicit}
\end{equation}
Differentiating \eqref{eq:optimized_total_power} gives
\begin{align}
    \frac{\mathrm{d}P_{\mathrm{tot}}^{\star}(x;w,s)}{\mathrm{d}x}
    =&\left[
    \frac{G}{\eta_{\mathrm{R}}}
    +\sqrt{\frac{GA_{\mathrm{t}}+\widetilde{\sigma}_{\mathrm{R}}^2}
    {\eta_{\mathrm{B}}\eta_{\mathrm{R}}B_{\mathrm{t}}(x)}}
    \right]B_{\mathrm{t}}'(x).
    \label{eq:total_power_derivative}
\end{align}
The bracketed factor in \eqref{eq:total_power_derivative} is strictly positive. Moreover,
\begin{equation}
    B_{\mathrm{t}}'(x)=e^{\alpha_{\mathrm{D}}x}F_{w,s}(x),
    \label{eq:Bt_derivative}
\end{equation}
where
\begin{align}
    F_{w,s}(x)
    =&\;\omega_w\left[\alpha_{\mathrm{D}}\ell_w(x)+2(x-x_w)\right]
    \nonumber\\
    &+\omega_s\left[\alpha_{\mathrm{D}}\ell_s(x)+2(x-x_s)\right].
    \label{eq:F_definition}
\end{align}
Therefore, every interior stationary point satisfies
\begin{equation}
    F_{w,s}(x)=0.
    \label{eq:stationary_condition}
\end{equation}
The function $F_{w,s}(x)$ is quadratic:
\begin{equation}
    F_{w,s}(x)=a_2x^2+a_1x+a_0,
    \label{eq:F_quadratic}
\end{equation}
with
\begin{align}
    a_2
    &=\alpha_{\mathrm{D}}(\omega_w+\omega_s),
    \label{eq:a2_stationary}
    \\
    a_1
    &=2\omega_w(1-\alpha_{\mathrm{D}}x_w)
    +2\omega_s(1-\alpha_{\mathrm{D}}x_s),
    \label{eq:a1_stationary}
    \\
    a_0
    &=\omega_w\left[\alpha_{\mathrm{D}}(x_w^2+y_w^2+d^2)-2x_w\right]
    \nonumber\\
    &\quad+\omega_s\left[\alpha_{\mathrm{D}}(x_s^2+y_s^2+d^2)-2x_s\right].
    \label{eq:a0_stationary}
\end{align}
For $\alpha_{\mathrm{D}}>0$, the stationary candidates are
\begin{equation}
    x_{\pm}
    =\frac{-a_1\pm\sqrt{a_1^2-4a_2a_0}}{2a_2},
    \label{eq:stationary_roots}
\end{equation}
provided that the discriminant is nonnegative. Only roots lying in the current \rev{fixed-decoding-order interval} are retained. When $\alpha_{\mathrm{D}}=0$, \eqref{eq:stationary_condition} becomes linear and has the unique solution
\begin{equation}
    x_0=\frac{\omega_wx_w+\omega_sx_s}{\omega_w+\omega_s}.
    \label{eq:no_attenuation_stationary_point}
\end{equation}

\begin{theorem}\label{thm:global_position_solution}
Let $\mathcal{C}$ contain: i) the endpoints $0$ and $L$; ii) all valid \rev{decoding-order switching points} in $\mathcal{X}_{\mathrm{o}}$; and iii) all real roots of \eqref{eq:stationary_condition} that lie in their corresponding \rev{fixed-decoding-order intervals}. At a \rev{decoding-order switching point}, both admissible decoding orders are evaluated. Then a globally optimal solution of problem~\eqref{prob:original}, under the adopted maximum-gain ZF beamformers, additive residual-SI model, and ideal SIC, is obtained by
\begin{equation}
    x^{\star}
    \in
    \operatorname*{arg\,min}_{x\in\mathcal{C}}
    P_{\mathrm{tot}}^{\star}(x;w(x),s(x)).
    \label{eq:global_position_solution}
\end{equation}
The remaining variables are recovered from Proposition~\ref{prop:fixed_x_solution}.
\end{theorem}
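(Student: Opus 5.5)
The plan is to reduce problem~\eqref{prob:original} to a one-dimensional minimization over $x$, and then show that a continuous function on a compact interval, with finitely many critical points, attains its minimum in $\mathcal{C}$.

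\textbf{Step 1: reduction to a function of $x$.} By \eqref{eq:optimal_direct_power}, $P_{\mathrm d}$ is fixed at $P_{\mathrm d}^\star$ independently of all other variables. Next, fix $x$ and a decoding order $(w,s)$. Any feasible point with $\beta>0$ maps under \eqref{eq:variable_transformation} to a feasible $(u,q_w,q_s)$ satisfying \eqref{eq:weak_linear_constraint}--\eqref{eq:strong_linear_constraint}, and conversely. The point $\beta=0$ is infeasible, since some PASS-side target is positive. Hence Proposition~\ref{prop:fixed_x_solution} gives the exact partial minimum $P_{\mathrm{tot}}^\star(x;w,s)$.

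The decoding order is itself a decision variable, so the true value function is $V(x)=\min_{(w,s)} P_{\mathrm{tot}}^\star(x;w,s)$. I need to argue that the order with $\zeta_s\ge\zeta_w$ is optimal. The non-SIC-favorable order would have to satisfy both \eqref{eq:weak_linear_constraint} and \eqref{eq:sic_linear_constraint}. Its binding weak constraint then involves $\max(B_w,B_s)$ in the role of $B_w$, which gives a $B_{\mathrm t}$ no smaller than that of the order in \eqref{eq:noma_order_equivalent}. A short comparison shows that assigning the larger $B$ to the weak role, i.e.\ weight $\Gamma_w$ rather than $(1+\Gamma_w)\Gamma_s$, is never worse. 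If this comparison fails in general, I fall back on evaluating both orders, which the theorem already prescribes at switching points and which is harmless elsewhere. In either case, $P_{\mathrm{tot}}^\star$ is increasing in $B_{\mathrm t}$ by \eqref{eq:optimized_total_power}, so minimizing $V$ reduces to minimizing $P^\star_{\mathrm{tot}}(x;w(x),s(x))$ over $[0,L]$. The global optimum of \eqref{prob:original} is then $\min_{x\in[0,L]}V(x)$.

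\textbf{Step 2: piecewise smoothness and continuity.} The switching set $\mathcal{X}_{\mathrm o}$ consists of the real roots of the quadratic \eqref{eq:order_switch_quadratic}, so it has at most two points. The degenerate case where the quadratic vanishes identically is the one in which both orders are equivalent everywhere, and it is trivial. On each closed interval $[\tau_{i-1},\tau_i]$ the order is fixed, and $P^\star_{\mathrm{tot}}(x;w,s)$ is a smooth ($C^1$) function of $x$, because $B_{\mathrm t}(x)>0$ is smooth by \eqref{eq:Bt_explicit}. Its minimum over this compact interval therefore exists. It is attained either at an endpoint $\tau_{i-1}$ or $\tau_i$, or at an interior point where the derivative vanishes. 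By \eqref{eq:total_power_derivative}--\eqref{eq:Bt_derivative}, and since the bracket and $e^{\alpha_{\mathrm D}x}$ are strictly positive, such a point satisfies $F_{w,s}(x)=0$. This is a quadratic by \eqref{eq:F_quadratic}, or a linear equation with root \eqref{eq:no_attenuation_stationary_point} when $\alpha_{\mathrm D}=0$. It cannot vanish identically, because $a_2>0$ when $\alpha_{\mathrm D}>0$, and the linear coefficient $2(\omega_w+\omega_s)>0$ when $\alpha_{\mathrm D}=0$. So each interval contributes at most two interior candidates.

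\textbf{Step 3: assembling the result.} The global minimum of $V$ over $[0,L]$ is the smallest of the interval minima. Each interval minimum lies in $\{\tau_{i-1},\tau_i\}$ or among the interval's stationary roots, and all of these points are in $\mathcal{C}$. At a switching point both orders are evaluated, so the endpoint value of each adjacent interval's function is included. Hence the minimum over the finite set $\mathcal{C}$ equals the global minimum, which proves \eqref{eq:global_position_solution}. The remaining variables then follow from Proposition~\ref{prop:fixed_x_solution}.

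The main obstacle is Step~1's justification that the metric-based order \eqref{eq:noma_order_metric} is optimal, rather than merely conventional. The remaining steps are routine compactness arguments over finitely many pieces.
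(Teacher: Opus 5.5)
Your proposal is correct and follows the same route as the paper's proof: partition $[0,L]$ at the (at most two) decoding-order switching points, note that $P_{\mathrm{tot}}^{\star}(x;w,s)$ is $C^1$ on each closed piece so its minimum lies at an endpoint or at a root of the quadratic $F_{w,s}$, and take the best of these finitely many candidates. Your extra Step~1 argument is one the paper does not need, because it designates the strong user by the $\zeta_k$ rule; it does go through: if $B_a\ge B_b$, placing $U_a$ in the strong role raises $B_{\mathrm{t}}$ by $(1+\Gamma_a)\Gamma_b(B_a-B_b)\ge 0$ while $\Theta=\Gamma_1+\Gamma_2+\Gamma_1\Gamma_2$ is unchanged, so your fallback is never needed (and on its own it would not have justified $\mathcal{C}$, since $\mathcal{C}$ omits the stationary roots of the other order's cost).
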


\begin{IEEEproof}
\rev{The interval boundaries in \eqref{eq:interval_boundaries} partition the compact feasible set $[0,L]$ into finitely many subintervals over which the NOMA decoding order is fixed.} Within each open subinterval, $P_{\mathrm{tot}}^{\star}(x;w,s)$ is continuously differentiable. Hence, a minimum over the closure of each subinterval must occur either at an endpoint or at an interior stationary point. By \eqref{eq:total_power_derivative}--\eqref{eq:stationary_condition}, all interior stationary points are roots of the quadratic $F_{w,s}(x)$. The resulting candidate set, comprising these stationary roots and all interval boundaries, therefore contains at least one global minimizer over $[0,L]$.
%Enumerating these roots and all interval boundaries therefore contains at least one global minimizer over $[0,L]$. 
\end{IEEEproof}

\begin{algorithm}[t]
\caption{Finite-Candidate Global Power Minimization}
\label{alg:global_solution}
\begin{algorithmic}[1]
\STATE Compute the ZF beamformers from \eqref{eq:vr_zf} and \eqref{eq:vd_zf}, and obtain $G$ and $D$.
\STATE Compute $P_{\mathrm{d}}^{\star}$ from \eqref{eq:optimal_direct_power}.
\STATE Solve \eqref{eq:order_switch_quadratic} and retain all real roots in $(0,L)$.
\STATE Use the retained roots and $\{0,L\}$ to construct the \rev{fixed-decoding-order intervals}.
\STATE Initialize $\mathcal{C}$ with all interval boundaries.
\FOR{each \rev{fixed-decoding-order interval}}
    \STATE Determine $(w,s)$ at an interior test point.
    \STATE Form $a_2$, $a_1$, and $a_0$ from \eqref{eq:a2_stationary}--\eqref{eq:a0_stationary}.
    \STATE Compute the real stationary roots from \eqref{eq:stationary_roots}, or \eqref{eq:no_attenuation_stationary_point} when $\alpha_{\mathrm{D}}=0$.
    \STATE Retain only roots lying in the current interval and add them to $\mathcal{C}$.
\ENDFOR
\STATE At every candidate, compute the resource allocation from Proposition~\ref{prop:fixed_x_solution} and evaluate \eqref{eq:optimized_total_power}.
\STATE At each \rev{decoding-order switching point}, evaluate both decoding orders.
\STATE Return the candidate with the minimum consumed power and its associated $P_{\mathrm{r}}^{\star}$, $P_{\mathrm{d}}^{\star}$, $\beta^{\star}$, $a_w^{\star}$, and $a_s^{\star}$.
\end{algorithmic}
\end{algorithm}

The complete finite-candidate global optimization procedure is summarized in Algorithm~\ref{alg:global_solution}.

\begin{remark}
The proposed method does not require a dense position grid or iterative alternating optimization. \rev{The decoding-order switching equation has at most two real roots, producing at most three fixed-decoding-order intervals.} Each interval contributes at most two stationary roots. Thus, the candidate-set cardinality is bounded by a constant independent of the position-search resolution.%Thus, the number of position candidates is bounded by a small constant for the two-user system.
\end{remark}

\begin{remark}
The global-optimality statement applies to problem~\eqref{prob:original} after the maximum-gain ZF beamformers have been adopted. It does not claim global optimality over unrestricted BS beamforming. A joint beamforming design that permits controlled inter-stream interference may provide an additional power reduction, but generally requires iterative optimization.
\end{remark}

\begin{remark}
No heuristic constraint $a_w\geq a_s$ is imposed. The coefficients are optimized directly from the QoS and SIC requirements. If $a_w\geq a_s$ is required, it is equivalent to $q_w\geq q_s$ and introduces an additional active-constraint case in the fixed-position solution.
\end{remark}

\begin{remark}
Practical BS radiated-power, relay output-power, and relay-gain
constraints can be incorporated as
\begin{equation}
P_r+P_d\leq P_{\mathrm B}^{\max},\qquad
P_{\mathrm R}^{\mathrm{out}}\leq P_{\mathrm R}^{\max},\qquad
\beta\leq\beta_{\max}.
\end{equation}
For fixed \((x,w,s)\), using
\(P_r=A_t+B_t(x)/u\) and
\[
P_{\mathrm R}^{\mathrm{out}}
=
GB_t(x)+
\left(GA_t+\widetilde{\sigma}_{\mathrm R}^{2}\right)u,
\]
the feasible amplification interval is
\begin{align}
u_{\min}(x)
&=
\frac{B_t(x)}
{P_{\mathrm B}^{\max}-P_d^\star-A_t},\\
u_{\max}(x)
&=
\min\left\{
\beta_{\max}^2,\,
\frac{P_{\mathrm R}^{\max}-GB_t(x)}
{GA_t+\widetilde{\sigma}_{\mathrm R}^{2}}
\right\}.
\end{align}
\rev{If $P_{\mathrm B}^{\max}>P_d^\star+A_t$, $P_{\mathrm R}^{\max}>GB_t(x)$, and
\(u_{\min}(x)\leq u_{\max}(x)\), the constrained optimum is}
\[
\widehat u^\star(x)
=
\min\left\{
u_{\max}(x),
\max\left\{u^\star(x),u_{\min}(x)\right\}
\right\}.
\]
Since the resulting optimized cost remains increasing in \(B_t(x)\),
the same finite position-candidate structure applies, with an
additional feasibility check and the projected value
\(\widehat u^\star(x)\) used in Proposition~1.
\end{remark}

\section{Benchmark Schemes}
\label{sec:benchmarks}

For a fair comparison, all schemes use the same user positions, target rates, bandwidth, receiver-noise powers, BS antenna configuration, channel realizations, BS amplifier efficiency, and common BS circuit power. Architecture-specific relay and array circuit powers are included only when the corresponding hardware is present.

\subsection{Direct BS ZF-NOMA Transmission}
\label{subsec:benchmark_direct}

In this benchmark, the relay, dielectric waveguide, and PA are removed, and the BS directly serves all three users. Let $\bm{h}_1$, $\bm{h}_2$, and $\bm{h}_{\mathrm{d}}$ denote their direct BS channels. The BS transmits
\begin{equation}
    \bm{x}_{\mathrm{B}}^{\mathrm{DT}}
    =\bm{v}_{\mathrm{g}}\left(\sqrt{q_w}s_w+\sqrt{q_s}s_s\right)
    +\sqrt{P_{\mathrm{d}}}\bm{v}_{\mathrm{d}}^{\mathrm{DT}}s_{\mathrm{d}},
    \label{eq:direct_tx_signal}
\end{equation}
where $q_w=P_{\mathrm{g}}a_w$ and $q_s=P_{\mathrm{g}}a_s$.

\rev{For this low-complexity direct-transmission benchmark, the NOMA-group beam is nulled at $U_{\mathrm{d}}$.} Define
\begin{equation}
    \bm{R}_{\mathrm{g}}=\bm{h}_1\bm{h}_1^H+\bm{h}_2\bm{h}_2^H.
\end{equation}
The beam $\bm{v}_{\mathrm{g}}$ is the unit-norm principal eigenvector of
\begin{equation}
    \bm{\Pi}_{\mathrm{d}}^{\perp}\bm{R}_{\mathrm{g}}\bm{\Pi}_{\mathrm{d}}^{\perp}.
    \label{eq:direct_group_beam}
\end{equation}
The direct-user beam is nulled at both NOMA users. Let $\bm{H}_{\mathrm{g}}=[\bm{h}_1,\bm{h}_2]$ and
\begin{equation}
    \bm{\Pi}_{\mathrm{g}}^{\perp}
    =\bm{I}-\bm{H}_{\mathrm{g}}
    (\bm{H}_{\mathrm{g}}^H\bm{H}_{\mathrm{g}})^{\dagger}
    \bm{H}_{\mathrm{g}}^H,
\end{equation}
where $(\cdot)^{\dagger}$ denotes the Moore–Penrose pseudo-inverse. Then
\begin{equation}
    \bm{v}_{\mathrm{d}}^{\mathrm{DT}}
    =\frac{\bm{\Pi}_{\mathrm{g}}^{\perp}\bm{h}_{\mathrm{d}}}
    {\|\bm{\Pi}_{\mathrm{g}}^{\perp}\bm{h}_{\mathrm{d}}\|}.
    \label{eq:direct_user_beam}
\end{equation}
This construction requires a nonzero projected direction, typically $N_{\mathrm{t}}\geq 3$ for independent channels.

Define
\begin{equation}
    g_k=|\bm{h}_k^H\bm{v}_{\mathrm{g}}|^2,
    \qquad
    g_{\mathrm{d}}=|\bm{h}_{\mathrm{d}}^H\bm{v}_{\mathrm{d}}^{\mathrm{DT}}|^2.
\end{equation}
The strong user is selected according to $g_s/\sigma_s^2\geq g_w/\sigma_w^2$. Owing to the ZF construction, the SINRs are
\begin{align}
    \gamma_w^{\mathrm{DT}}
    &=\frac{q_wg_w}{q_sg_w+\sigma_w^2},
    \\
    \gamma_{s\rightarrow w}^{\mathrm{DT}}
    &=\frac{q_wg_s}{q_sg_s+\sigma_s^2},
    \\
    \gamma_s^{\mathrm{DT}}
    &=\frac{q_sg_s}{\sigma_s^2},
    \\
    \gamma_{\mathrm{d}}^{\mathrm{DT}}
    &=\frac{P_{\mathrm{d}}g_{\mathrm{d}}}{\sigma_{\mathrm{d}}^2}.
\end{align}
Consequently,
\begin{align}
    q_s^{\star}
    &=\Gamma_s\frac{\sigma_s^2}{g_s},
    \\
    q_w^{\star}
    &=\Gamma_w\left(q_s^{\star}+\frac{\sigma_w^2}{g_w}\right),
    \\
    P_{\mathrm{d}}^{\star,\mathrm{DT}}
    &=\Gamma_{\mathrm{d}}\frac{\sigma_{\mathrm{d}}^2}{g_{\mathrm{d}}},
\end{align}
and the total consumed power is
\begin{equation}
    P_{\mathrm{tot}}^{\mathrm{DT}}
    =\frac{q_w^{\star}+q_s^{\star}+P_{\mathrm{d}}^{\star,\mathrm{DT}}}
    {\eta_{\mathrm{B}}}+P_{\mathrm{c,B}}.
    \label{eq:direct_total_power}
\end{equation}

\subsection{Array-Fed Wi-PASS}
\label{subsec:benchmark_array_pass}

In this benchmark, the horn receiving antenna is replaced by a conventional $N_{\mathrm{R}}$-element receive array. Let
\begin{equation}
    \bm{H}_{\mathrm{BR}}\in\mathbb{C}^{N_{\mathrm{R}}\times N_{\mathrm{t}}}
\end{equation}
denote the BS--relay channel matrix. \rev{The receive combiner and ZF beamformers are constructed as follows.} First, let $\bm{r}_{\max}$ be a dominant right singular vector of $\bm{H}_{\mathrm{BR}}\bm{\Pi}_{\mathrm{d}}^{\perp}$ and define
\begin{equation}
    \bm{v}_{\mathrm{r}}^{\mathrm{arr}}
    =\frac{\bm{\Pi}_{\mathrm{d}}^{\perp}\bm{r}_{\max}}
    {\|\bm{\Pi}_{\mathrm{d}}^{\perp}\bm{r}_{\max}\|}.
    \label{eq:array_relay_beam}
\end{equation}
The maximum-ratio receive combiner is
\begin{equation}
    \bm{q}
    =\frac{\bm{H}_{\mathrm{BR}}\bm{v}_{\mathrm{r}}^{\mathrm{arr}}}
    {\|\bm{H}_{\mathrm{BR}}\bm{v}_{\mathrm{r}}^{\mathrm{arr}}\|},
    \qquad \|\bm{q}\|^2=1.
    \label{eq:array_combiner}
\end{equation}
Define the equivalent first-hop channel $\bm{h}_{\mathrm{eq}}=\bm{H}_{\mathrm{BR}}^H\bm{q}$ and
\begin{equation}
    \bm{v}_{\mathrm{d}}^{\mathrm{arr}}
    =\frac{\bm{\Pi}_{\mathrm{eq}}^{\perp}\bm{h}_{\mathrm{d}}}
    {\|\bm{\Pi}_{\mathrm{eq}}^{\perp}\bm{h}_{\mathrm{d}}\|},
    \qquad
    \bm{\Pi}_{\mathrm{eq}}^{\perp}
    =\bm{I}-\frac{\bm{h}_{\mathrm{eq}}\bm{h}_{\mathrm{eq}}^H}
    {\|\bm{h}_{\mathrm{eq}}\|^2}.
    \label{eq:array_direct_beam}
\end{equation}
The resulting gains are
\begin{equation}
    G_{\mathrm{arr}}
    =|\bm{q}^H\bm{H}_{\mathrm{BR}}\bm{v}_{\mathrm{r}}^{\mathrm{arr}}|^2,
    \qquad
    D_{\mathrm{arr}}
    =|\bm{h}_{\mathrm{d}}^H\bm{v}_{\mathrm{d}}^{\mathrm{arr}}|^2.
    \label{eq:array_gains}
\end{equation}

The beamformers in \eqref{eq:array_relay_beam}-\eqref{eq:array_direct_beam} provide mutual ZF between the direct
user and the relay after receive combining: 
\(\mathbf h_d^H\mathbf v_r^{\rm arr}=0\) and
\(\mathbf q^H\mathbf H_{\rm BR}\mathbf v_d^{\rm arr}=0\).
Moreover, \(\mathbf v_r^{\rm arr}\) maximizes the array first-hop
gain under the former ZF constraint, whereas
\(\mathbf v_d^{\rm arr}\) maximizes the direct-user gain under the
latter constraint.

Let $\widetilde{\sigma}_{\mathrm{R,arr}}^2
=\sigma_{\mathrm{R,arr}}^2+\sigma_{\mathrm{SI,arr}}^2$. Proposition~\ref{prop:fixed_x_solution} and Theorem~\ref{thm:global_position_solution} apply after the substitutions
\begin{equation}
    G\rightarrow G_{\mathrm{arr}},\quad
    D\rightarrow D_{\mathrm{arr}},\quad
    \widetilde{\sigma}_{\mathrm{R}}^2
    \rightarrow\widetilde{\sigma}_{\mathrm{R,arr}}^2.
    \label{eq:array_substitutions}
\end{equation}
For an analog-combining array, its relay circuit power may be modeled as
\begin{equation}
    P_{\mathrm{c,R}}^{\mathrm{arr}}
    =P_{\mathrm{c,R}}
    +N_{\mathrm{R}}P_{\mathrm{LNA}}
    +P_{\mathrm{RF}}
    +N_{\mathrm{R}}P_{\mathrm{PS}}
    +P_{\mathrm{BB}},
    \label{eq:array_circuit_power}
\end{equation}
where $P_{\mathrm{LNA}}$, $P_{\mathrm{RF}}$, $P_{\mathrm{PS}}$, and $P_{\mathrm{BB}}$ denote the low-noise-amplifier, radio-frequency (RF)-chain, phase-shifter, and baseband powers, respectively. This benchmark quantifies the horn antenna's directional-gain, passive-isolation, residual-SI, and hardware-complexity advantages.

\subsection{Horn Relay Without PASS}
\label{subsec:benchmark_horn_nopass}

This benchmark retains the BS and horn-relay architecture but removes the dielectric waveguide and PA. Let the fixed relay transmit location be $\bm{\Phi}_{\mathrm{R}}=(x_{\mathrm{R}},y_{\mathrm{R}},d_{\mathrm{R}})$. The relay--user gain is modeled as
\begin{equation}
    \overline{H}_k
    =G_{\mathrm{R},k}^{\mathrm{t}}G_k^{\mathrm{r}}
    \left(\frac{c}{4\pi f d_{\mathrm{R}k}}\right)^2,
    \qquad k\in\{1,2\},
    \label{eq:nopass_channel}
\end{equation}
where $d_{\mathrm{R}k}=\|\bm{\Phi}_{\mathrm{R}}-\bm{\Phi}_k\|$, and the transmit gain $G_{\mathrm{R},k}^{\mathrm{t}}$ may include the user-dependent horn pattern. \rev{The same channel-gain-to-noise-power ratio $\overline{H}_k/\sigma_k^2$ determines the NOMA decoding order.}

The SINRs follow \eqref{eq:weak_user_sinr}--\eqref{eq:strong_user_sinr} after replacing $H_k(x)$ by $\overline{H}_k$. Since the relay transmit location is fixed, no position optimization is required. Proposition~\ref{prop:fixed_x_solution} applies directly with
\begin{equation}
    B_k=\frac{\sigma_k^2}{G\overline{H}_k}.
    \label{eq:nopass_Bk}
\end{equation}
\rev{In the numerical comparisons, the relay antenna gains, PASS coupling efficiency, and waveguide-injection loss are normalized consistently across the proposed and no-PASS architectures.}

\subsection{Array Relay Without PASS}
\label{subsec:benchmark_array_nopass}

This benchmark combines the array-based wireless first hop in Section~\ref{subsec:benchmark_array_pass} with the fixed conventional relay--user links in Section~\ref{subsec:benchmark_horn_nopass}. Hence, it uses $G_{\mathrm{arr}}$, $D_{\mathrm{arr}}$, $\widetilde{\sigma}_{\mathrm{R,arr}}^2$, and the array circuit power $P_{\mathrm{c,R}}^{\mathrm{arr}}$, while replacing the PASS gains by $\overline H_k$. Proposition~\ref{prop:fixed_x_solution} then applies with
\begin{equation}
    B_k^{\mathrm{arr,noPASS}}
    =\frac{\sigma_k^2}{G_{\mathrm{arr}}\overline H_k}.
\end{equation}
No position optimization is performed. Comparing this scheme with array-fed Wi-PASS isolates the benefit of the waveguide and movable PA under the same array first hop, whereas comparing it with the horn relay without PASS isolates the impact of the relay receiving architecture.

\subsection{Equal-Time OMA-Assisted Wi-PASS}
\label{subsec:oma_benchmark}

In this benchmark, the proposed physical architecture is retained, but users $U_1$ and $U_2$ are served in two equal orthogonal time slots,
\begin{equation}
    \tau_1=\tau_2=\frac{1}{2}.
    \label{eq:oma_equal_time}
\end{equation}
The direct user is served continuously through the ZF direct stream, and one common PA position $x$ is used in both OMA slots. To impose the same average spectral-efficiency requirement $R_k^{\min}=\log_2(1+\Gamma_k)$ as in the NOMA system, the instantaneous SINR target during user $k$'s half-slot is
\begin{equation}
    \Gamma_k^{\mathrm{OMA}}
    =2^{R_k^{\min}/\tau_k}-1
    =2^{2R_k^{\min}}-1
    =(1+\Gamma_k)^2-1.
    \label{eq:oma_target}
\end{equation}

For fixed $x$, define $u_k=\beta_k^2$ and $q_k=P_{\mathrm{r},k}u_k$. The single-user QoS constraint in slot $k$ is
\begin{equation}
    q_k\geq \Gamma_k^{\mathrm{OMA}}
    \left(Au_k+B_k(x)\right),
    \label{eq:oma_constraint}
\end{equation}
which is active at optimum. Following the same argument as Proposition~\ref{prop:fixed_x_solution}, the optimal amplification variable is
\begin{equation}
    u_k^{\star}(x)
    =\sqrt{\frac{\eta_{\mathrm{R}}\Gamma_k^{\mathrm{OMA}}B_k(x)}
    {\eta_{\mathrm{B}}\left[
    G\Gamma_k^{\mathrm{OMA}}A
    +\widetilde{\sigma}_{\mathrm{R}}^2\right]}}.
    \label{eq:oma_optimal_u}
\end{equation}
The minimum variable consumed power during slot $k$ is
\begin{align}
    \psi_k^{\star}(x)
    =&\frac{\Gamma_k^{\mathrm{OMA}}A}{\eta_{\mathrm{B}}}
    +\frac{G\Gamma_k^{\mathrm{OMA}}B_k(x)}{\eta_{\mathrm{R}}}
    \nonumber\\
    &+2\sqrt{\frac{\Gamma_k^{\mathrm{OMA}}B_k(x)
    \left[G\Gamma_k^{\mathrm{OMA}}A
    +\widetilde{\sigma}_{\mathrm{R}}^2\right]}
    {\eta_{\mathrm{B}}\eta_{\mathrm{R}}}}.
    \label{eq:oma_slot_power}
\end{align}
Thus, the equal-time OMA benchmark solves the one-dimensional problem
\begin{equation}
    \min_{0\leq x\leq L}
    \frac{P_{\mathrm{d}}^{\star}}{\eta_{\mathrm{B}}}
    +\frac{1}{2}\sum_{k=1}^{2}\psi_k^{\star}(x)
    +P_{\mathrm{c,B}}+P_{\mathrm{c,R}}.
    \label{eq:oma_problem}
\end{equation}
In the simulations, \eqref{eq:oma_problem} is evaluated over an 81-point uniform position grid. \rev{We intentionally fix $\tau_1=\tau_2=1/2$ rather than optimizing the OMA time fractions. This provides a simple equal-time OMA reference for assessing the effect of simultaneous NOMA transmission; optimizing $\tau_1$ and $\tau_2$ could further reduce the OMA power and is not considered here.}

\subsection{Comparison Objectives and Fairness}

The comparisons quantify: i) the coverage-extension benefit of hybrid Wi-PASS relative to direct BS transmission; ii) the directional, SI-isolation, and hardware advantages of the horn receiver relative to an array-fed relay; iii) the benefit of the dielectric waveguide and movable PA relative to conventional fixed relay--user links; and iv) the multiple-access benefit of NOMA relative to equal-time OMA.

For a fair comparison, the same consumed-power definition is applied to every scheme. The BS radiated power is divided by $\eta_{\mathrm{B}}$ in all cases. Relay output power is divided by $\eta_{\mathrm{R}}$ whenever a relay is present. The same user locations, QoS targets, receiver noise powers, and shadowing realizations are reused for all schemes and all sweep points. Common RF-chain and phase-shifter terms are included consistently, whereas architecture-specific array and relay terms are charged only to the schemes that use them.

\section{Numerical Results}
\label{sec:simulation}

\subsection{Simulation Setup}

We evaluate the proposed scheme and five benchmarks using the same random user and shadowing realizations. The two PASS-side users and the direct user are independently and uniformly distributed over a $30\times 10$~m$^2$ area. The BS is located on the negative $x$-axis at a distance $d_1$ from the waveguide input. Unless otherwise stated, $d_1=50$~m and the common minimum SINR of all three users is $\gamma_0=20$~dB. The principal parameters are listed in Table~\ref{tab:simulation_parameters}.

\begin{table}[t]
\centering
\caption{Simulation Parameters}
\label{tab:simulation_parameters}
\begin{tabular}{ll}
\toprule
Parameter & Value \\
\midrule
Carrier frequency $f$ & $28$ GHz \\
Coverage area & $30\times 10$ m$^2$ \\
Waveguide length/height $(L,d)$ & $(30,3)$ m \\
BS/relay-array elements $(N_{\mathrm t},N_{\mathrm R})$ & $(16,16)$ \\
Bandwidth/noise figure & $400$ MHz/$10$ dB \\
Waveguide power attenuation $\alpha_{\mathrm D}$ & $0.01$ m$^{-1}$ \\
Horn receive gain & $20$ dBi \\
Array element gain & $1$ dBi \\
NLoS path-loss exponent/shadowing std. & $3.5/8$ dB \\
BS/relay power-amplifier efficiencies $(\eta_{\mathrm B},\eta_{\mathrm R})$ & $(0.9,0.9)$ \\
BS/relay fixed circuit powers $(P_{\mathrm{c,B}},P_{\mathrm{c,R}})$ & $(0.1,0.1)$ W \\
Array RF-chain power $P_{\mathrm{RF}}$ & $0.1$ W \\
Array phase-shifter power $P_{\mathrm{PS}}$ & $0.01$ W \\
Horn/array residual SI (baseline) & $-85/-75$ dBm \\
Monte Carlo realizations & $1000$ \\
OMA allocation & $\tau_1=\tau_2=1/2$ \\
OMA position grid & $81$ points \\
\bottomrule
\end{tabular}
\end{table}

The horn first-hop gain follows free-space propagation with the BS array gain and the $20$-dBi horn receive gain. The array-fed benchmark uses a rank-one line-of-sight (LoS) multi-input multi-output (MIMO) first hop with normalized steering vectors and dominant-singular-mode transmission. The direct-transmission links use the NLoS path-loss exponent and log-normal shadowing in Table~\ref{tab:simulation_parameters}. For the residual-SI sweep, the array relay is assigned a 10-dB higher SI power than the horn relay, reflecting the latter's passive directional isolation. All powers reported below are total consumed powers and include the relevant amplifier efficiencies, RF chains, phase shifters, and relay circuitry.

\subsection{Impact of the Minimum SINR}

Fig.~\ref{fig:power_vs_gamma} plots the average consumed power when the common target SINR varies from 10 to 30~dB. \rev{The Monte Carlo averages are reported in Table~\ref{tab:gamma_results}.} Direct transmission is by far the most power demanding because the PASS-side users are served over unfavorable NLoS BS links. Its very large values represent the unconstrained power required to satisfy the prescribed QoS; under practical BS power limits, such operating points would instead be infeasible
Among the relay architectures, adding PASS lowers the second-hop loss, while the horn-fed design further avoids the higher array SI and phase-shifter power. \rev{Consequently, the proposed scheme requires the lowest average consumed power among the considered schemes over the full SINR range.}

\begin{table*}[t]
\centering
\caption{Average Total Consumed Power Versus the Common SINR Target (W)}
\label{tab:gamma_results}
\begin{tabular}{c|rrrrrr}
\toprule
$\gamma_0$ (dB) & Direct & Array Relay without PASS & Horn Relay without PASS & Array-Fed Wi-PASS & Equal-time OMA & Proposed \\
\midrule
10 & 768.52 & 1.7107 & 1.2678 & 1.2010 & 0.85986 & 0.83978 \\
15 & 4488.0 & 9.4042 & 6.8529 & 5.0647 & 3.7014 & 3.2172 \\
20 & 34775 & 81.708 & 59.046 & 40.567 & 30.486 & 24.632 \\
25 & $3.158\times10^5$ & 790.04 & 569.40 & 385.57 & 293.17 & 231.27 \\
30 & $3.057\times10^6$ & 7826.8 & 5636.3 & 3803.8 & 2903.7 & 2273.9 \\
\bottomrule
\end{tabular}
\end{table*}

\rev{The equal-time OMA and NOMA results are close on the logarithmic scale at low SINR, but their linear-power difference becomes substantial as the QoS requirement increases.} Relative to the equal-time OMA benchmark, the proposed NOMA scheme reduces the average consumed power by 2.3\%, 13.1\%, 19.2\%, 21.1\%, and 21.7\% at 10, 15, 20, 25, and 30~dB, respectively. Equivalently, the NOMA--OMA gaps grow from 0.10~dB at 10~dB to 1.06~dB at 30~dB. This trend follows from eq. \eqref{eq:oma_target}: equal-time OMA must support the instantaneous target $(1+\gamma_0)^2-1$ during each half-slot. At $\gamma_0=20$~dB, the proposed scheme also saves 39.3\% relative to array-fed Wi-PASS, 58.3\% relative to the horn relay without PASS, and 69.9\% relative to the array relay without PASS.

\begin{figure}[t]
\centering
\safeincludegraphics[width=0.5\textwidth]{Images/multiuser_power_vs_gamma.eps}
\caption{Average total consumed power versus the common minimum SINR $\gamma_0$.}
\label{fig:power_vs_gamma}
\end{figure}

\subsection{Impact of the BS--Relay Distance}

Fig.~\ref{fig:power_vs_d1} varies $d_1$ from 50 to 200~m at $\gamma_0=20$~dB. The same users and shadowing coefficients are retained at every distance. All wireless-fed schemes require more power as $d_1$ increases because the first-hop gain decays as $d_1^{-2}$. \rev{Nevertheless, the proposed scheme retains the lowest average consumed power throughout the sweep.} From the plotted results, its average power rises by only about 5~dB over the considered range, whereas direct transmission increases by roughly 17~dB because all three direct BS--user distances increase and the two PASS-side users retain the NLoS loss exponent. \rev{Equal-time OMA remains above NOMA, while the two array-based schemes require more power because of their higher circuit consumption and weaker SI suppression. The proposed scheme also requires less power than the Hybrid without PASS scheme over the entire distance range, indicating that the waveguide and movable PA continue to reduce the second-hop power requirement even as the first-hop path loss increases.}

\begin{figure}[t]
\centering
\safeincludegraphics[width=0.5\textwidth]{Images/multiuser_power_vs_d1.eps}
\caption{Average total consumed power versus the BS--relay distance $d_1$ at $\gamma_0=20$~dB.}
\label{fig:power_vs_d1}
\end{figure}

\subsection{Impact of Residual SI}

Fig.~\ref{fig:power_vs_SI} investigates horn-relay SI powers from $-100$ to $-50$~dBm. The array SI is set 10~dB higher at each point. As expected, direct transmission is independent of relay SI and therefore forms a horizontal reference. The relay-assisted curves are nearly insensitive to SI when the impairment is below the thermal-noise-dominated region, but they increase rapidly once SI becomes sufficiently strong. The proposed scheme remains the most power-efficient relay architecture over the complete sweep and is still well below direct transmission at $-50$~dBm. The array-fed schemes deteriorate earlier and more sharply because they combine a 10-dB larger SI power with additional phase-shifter consumption. \rev{Meanwhile, the proposed scheme requires less power than the horn-no-PASS scheme, indicating that the PASS-enabled second hop lowers the consumed power even under strong residual SI. Equal-time OMA follows the same trend but remains consistently higher because of its larger slot-wise SINR targets.}

\begin{figure}[t]
\centering
\safeincludegraphics[width=0.5\textwidth]{Images/multiuser_power_vs_SI.eps}
\caption{Average total consumed power versus the residual SI power at the horn relay. The array-relay SI is 10~dB higher.}
\label{fig:power_vs_SI}
\end{figure}

\section{Conclusion}
\label{sec:conclusion}
\rev{We considered total consumed-power minimization for a hybrid Wi-PASS that serves one direct user and two NOMA users through a full-duplex relay and a movable PA. After adopting maximum-gain ZF beamforming, the resource allocation for any fixed PA position and decoding order reduces to a strictly convex scalar problem with a closed-form optimum. The remaining PA-position optimization is determined by a finite set of decoding-order boundaries and quadratic stationary points, which yields the global optimum with respect to the remaining decision variables under the adopted ZF structure.}

\rev{Numerical comparisons show that the proposed scheme requires the lowest consumed power among the considered schemes over the SINR, BS--relay-distance, and residual-SI sweeps. Relative to equal-time OMA, its power saving increases from 2.3\% at 10~dB to 21.7\% at 30~dB; at a 20-dB target, it reduces power by 39.3\% relative to array-fed Wi-PASS and by 58.3\% relative to the horn relay without PASS. The distance sweep further shows that PASS assistance continues to reduce consumed power as the wireless first hop weakens, while the SI sweep demonstrates the robustness of the proposed architecture to increasing residual SI and highlights the benefits of the considered horn-assisted, PASS-enabled design. Future work may address imperfect SIC, power-dependent residual SI, multiple PAs and waveguides, discrete PA activation, and joint beamforming beyond the adopted ZF structure.}

% Reviewer note: in biblio.bib, capitalize ``nextG'' in the title of the Wi-PASS reference (key: wijewardhana2025wipass), e.g., protect it as {nextG}.
\bibliographystyle{IEEEtran}
\bibliography{biblio}

\balance
\end{document}